\newcommand{\IR}{\mathbb{R}}
\newcommand{\C}{\mathcal{C}}
\newcommand{\T}{\mathcal{T}}
\newcommand{\F}{\mathcal{F}}
\newcommand{\CH}{Conv}
\title{Computing Covers of Plane Forests\thanks{The authors were supported by NSERC. 
A.B.\ was supported by Carleton University's I-CUREUS program.}}
\author{
Luis Barba\thanks{School of Computer Science, Carleton University, 
      Ottawa, Canada.} \thanks{Boursier FRIA du FNRS, D\'epartement d'Informatique, Universit\'e Libre de Bruxelles}
\and
Alexis Beingessner\footnotemark[2]
\and 
Prosenjit Bose\footnotemark[2]
\and 
Michiel Smid\footnotemark[2]
} 
\begin{document}
\thispagestyle{empty}
\maketitle


\begin{abstract}
Let $\phi$ be a function that maps any non-empty subset $A$ of 
$\IR^2$ to a non-empty subset $\phi(A)$ of $\IR^2$. A $\phi$-cover of 
a set $T=\{T_1, T_2, \dots, T_m\}$ of pairwise non-crossing trees 
in the plane is a set of pairwise disjoint connected regions such that
\begin{enumerate} 
\item each tree $T_i$ is contained in some region of the cover,
\item each region of the cover is either 
  \begin{enumerate}
    \item $\phi(T_i)$ for some $i$, or \label{prop:opt1}
    \item $\phi(A \cup B)$, where $A$ and $B$ are constructed by 
          either~\ref{prop:opt1} or~\ref{prop:opt2}, and 
          $A \cap B \neq \emptyset$. \label{prop:opt2}   
  \end{enumerate}
\end{enumerate}
We present two properties for the function $\phi$ that make the 
$\phi$-cover well-defined. Examples for such functions $\phi$ are 
the convex hull and the axis-aligned bounding box. For both of these 
functions $\phi$, we show that the $\phi$-cover can be computed in 
$O(n\log^2n)$ time, where $n$ is the total number of vertices of the 
trees in $T$. 
\end{abstract}


\section{Introduction}   \label{secintro} 
Let a \emph{geometric tree} be a plane straight-line embedding 
of a tree in $\IR^2$. Consider a set $T = \{T_1,T_2,\ldots,T_m\}$ 
of $m$ pairwise non-crossing geometric trees with a total of $n$ 
vertices in general position. 
The \emph{coverage} of these trees is the set of all points $p$ in 
$\IR^2$ such that every line through $p$ intersects at least one of 
the trees. Beingessner and Smid~\cite{Beingessner12} showed that the coverage 
can be computed in $O(m^2n^2)$ time. They also presented an example of 
$m=n/2$ pairwise non-crossing geometric trees (each one being a line segment) 
whose coverage has size $\Omega(n^4)$. Thus, the worst-case complexity of 
computing the coverage is $\Theta(n^4)$. 

Since the worst-case inputs are rather artificial, we consider the 
following heuristic for reducing the running time. Let $\CH$ denote 
the convex hull. We observe that the coverage of the trees in $T$ 
is equal to the coverage of their convex hulls. 
Moreover, if two convex hulls $\CH(T_i)$ and $\CH(T_j)$ overlap, then 
we can replace them by the convex hull of their union without changing 
the coverage. By repeating this process, we obtain a collection of 
pairwise disjoint convex polygons whose coverage is equal to the 
coverage of the input trees. Ideally, the number of these convex 
polygons and their total number of vertices are much less than $m$ 
and $n$, respectively. If this is the case, then running the 
algorithm of~\cite{Beingessner12} on the convex polygons gives the 
coverage of the input trees in a time that is much less than 
$\Theta(n^4)$ time, provided that we are able to quickly compute 
the collection of pairwise disjoint convex polygons. In this paper, 
we show that this is possible, by providing an 
$O(n \log^2 n)$--time algorithm. 

We now formally state the above process. 
\begin{enumerate} 
\item Let $\C = \{ \CH(T_i) \mid 1 \leq i \leq m \}$. 
\item While the elements of $\C$ are not pairwise disjoint: 
      \begin{enumerate} 
      \item Take two arbitrary elements $C$ and $C'$ in $\C$ for 
            which $C \neq C'$ and $C \cap C' \neq \emptyset$. 
      \item Let $C'' = \CH(C \cup C')$. 
      \item Set $\C = ( \C \setminus \{C,C'\} ) \cup \{ C''\}$.  
      \end{enumerate} 
\item Return the set $\C$. 
\end{enumerate} 

The output $\C$ is a collection of pairwise disjoint convex polygons, 
which we refer to as the \emph{hull-cover} of $T$. See Figure~\ref{hull-evaluation} for two examples. Since in Step 2(b), the two elements 
$C$ and $C'$ are chosen \emph{arbitrarily} (as long as they are 
distinct and overlap), the reader may object to the use of the 
word ``the'' in front of ``hull-cover''. In Section~\ref{sec:phi-cover} we
justify the use of this word by proving that, no matter 
what choices are made in Step 2(b), the output $\C$ is always the 
same. 

\begin{figure}[h!]
  \centering
  \includegraphics[scale=0.4]{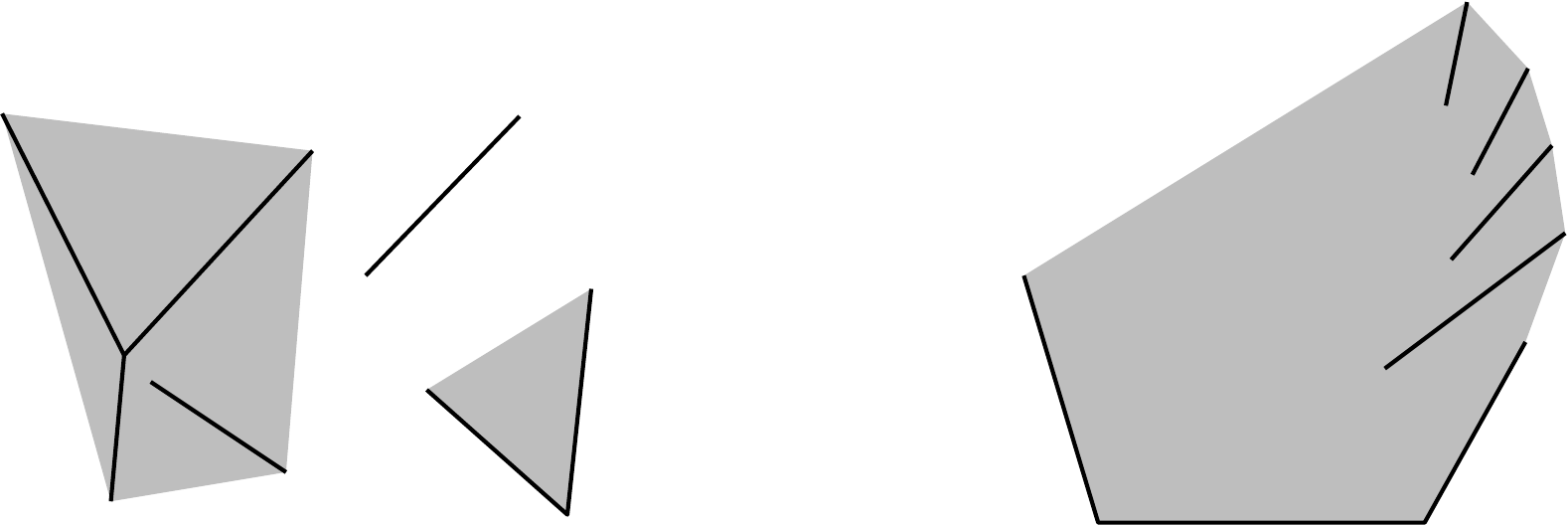}
  \caption{Two examples of hull-covers. Note that the hull-cover on the right demonstrates what is in some sense the worst case for the number of times intersections will need to be re-evaluated.}
  \label{hull-evaluation}
\end{figure} 

\section{$\phi$-Covers} \label{sec:phi-cover}
Consider a function $\phi$ that maps any non-empty subset $A$ of $\IR^2$ 
to a non-empty subset $\phi(A)$ of $\IR^2$. We assume that this function 
satisfies the following properties: 

\vspace{0.5em} 

\noindent 
{\bf Property 1:} For any non-empty subset $A$ of $\IR^2$, 
\[ A \subseteq \phi(A) .
\] 

\noindent 
{\bf Property 2:} For any two non-empty subsets $A$ and $B$ of $\IR^2$, 
\[ \mbox{if $A \subseteq \phi(B)$, then $\phi(A) \subseteq \phi(B)$.}
\]

Both the convex hull and axis-aligned bounding box functions satisfy 
these properties. However, the minimum enclosing circle function does 
not satisfy Property~2. 

We rewrite the algorithm described in Section~\ref{secintro} 
using the function $\phi$ instead of $\CH$. We also use a forest $\F$ 
of binary trees to keep track of the history of the process; each 
node $u$ in this forest stores a value $\phi(u)$. 
The forest helps us to prove that the $\phi$-cover is well-defined. 

\begin{enumerate} 
\item For each $i$ with $1 \leq i \leq m$, let $\T_i$ be the tree 
      consisting of the single node $r_i$, whose value $\phi(r_i)$ is 
      equal to $\phi(T_i)$. 
\item Initialize the forest $\F = \{ \T_i \mid 1 \leq i \leq m \}$. 
\item Let $\C = \{ \phi(r_i) \mid 1 \leq i \leq m \}$. 
\item While the elements of $\C$ are not pairwise disjoint: 
      \begin{enumerate} 
      \item Take two arbitrary roots $r$ and $r'$ in the forest $\F$ 
            for which $r \neq r'$ and $\phi(r) \cap \phi(r') \neq \emptyset$. 
      \item Let $\T$ and $\T'$ be the trees in $\F$ whose roots are 
            $r$ and $r'$, respectively. 
      \item Let $r''$ be a new node and set its value $\phi(r'')$ to 
            $\phi( \phi(r) \cup \phi(r') )$. 
      \item Create a new tree $\T''$ whose root is $r''$ and make 
            $\T$ and $\T'$ the two children of $r''$. 
      \item Set $\F = ( \F \setminus \{\T,\T'\} ) \cup \{ \T''\}$.  
      \item Set $\C = ( \C \setminus \{\phi(r),\phi(r')\} ) \cup \{ \phi(r'') \}$.  
      \end{enumerate} 
\item Return the forest $\F$ and the set $\C$. 
\end{enumerate} 
We refer to the output set $\C$ as the $\phi$-\emph{cover} 
of $T$. In Theorem~\ref{theorem0} below, we prove that the $\phi$-cover 
is well-defined. Before we prove this theorem, we present a third 
property of the function $\phi$:  

\vspace{0.5em} 

\noindent 
{\bf Property 3:} For any two non-empty subsets $A$ and $B$ of $\IR^2$, 
\[ \phi(A) \subseteq \phi(\phi(A) \cup \phi(B)) .
\]
Note that this property follows trivially from Property~1, because 
\[ \phi(A) \subseteq \phi(A) \cup \phi(B) \subseteq \phi(\phi(A) \cup \phi(B)) .
\]

\begin{lemma}  \label{lem:always-there}
Let $\C$ and $\C'$ be two $\phi$-covers with corresponding forests 
$\F$ and $\F'$, respectively. For each node $u$ in $\F$, there exists a 
root $r'$ in $\F'$ such that $\phi(u) \subseteq \phi(r')$. 
\end{lemma}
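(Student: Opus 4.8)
The plan is to prove the statement by induction on the height of the node $u$ in its tree of $\F$. Before starting I would record two facts. First, a \emph{monotonicity} fact: within a single forest, Property~1 gives $\phi(r),\phi(r')\subseteq \phi(r)\cup\phi(r')\subseteq \phi(\phi(r)\cup\phi(r'))=\phi(r'')$ whenever $r''$ is the parent of $r$ and $r'$, so the value stored at any node is contained in the value stored at each of its ancestors; in particular $\phi(x)\subseteq\phi(y)$ whenever $y$ is the root of the tree containing $x$. Second, since the process that produces $\F'$ halts only when the elements of $\C'$ are pairwise disjoint, the values of distinct roots of $\F'$ are pairwise disjoint; and each internal node of $\F'$ (and of $\F$) was created from two nodes whose stored values had nonempty intersection, by the selection rule in Step~4(a).

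For the base case, suppose $u$ is a leaf, so $u=r_i$ and $\phi(u)=\phi(T_i)$ for some $i$. The same tree $T_i$ seeds a leaf of $\F'$ as well; letting $r'$ be the root of the tree of $\F'$ that contains this leaf, the monotonicity fact yields $\phi(u)=\phi(T_i)\subseteq\phi(r')$, as required.

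For the inductive step, suppose $u$ is internal with children $v$ and $w$, so that $\phi(u)=\phi(\phi(v)\cup\phi(w))$ and, by the selection rule, $\phi(v)\cap\phi(w)\neq\emptyset$. Both $v$ and $w$ have smaller height than $u$, so the induction hypothesis supplies roots $s$ and $t$ of $\F'$ with $\phi(v)\subseteq\phi(s)$ and $\phi(w)\subseteq\phi(t)$. Since $\phi(v)\cap\phi(w)\neq\emptyset$, any common point lies in $\phi(s)\cap\phi(t)$, so these two root values intersect; but distinct roots of $\F'$ have disjoint values, forcing $s=t$. Hence $\phi(v)\cup\phi(w)\subseteq\phi(s)$. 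By construction the value $\phi(s)$ equals $\phi(B)$ for some set $B$ (namely $B=T_j$ if $s$ is a leaf, and $B=\phi(v')\cup\phi(w')$ for its children $v',w'$ otherwise), so Property~2 applied with $A=\phi(v)\cup\phi(w)$ gives $\phi(u)=\phi(A)\subseteq\phi(B)=\phi(s)$. Taking $r'=s$ completes the step.

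I expect the inductive step to be the main obstacle, and within it the crucial observation is that the overlap of $\phi(v)$ and $\phi(w)$ recorded when $u$ was formed, combined with the pairwise disjointness of the completed forest $\F'$'s roots, forces the two inductively obtained roots $s$ and $t$ to coincide; once they do, Property~2 immediately upgrades the containment of the union to a containment of $\phi(u)$. A minor point to check along the way is that Property~2 is applicable, which holds precisely because every stored value is, by construction, in the image of $\phi$.
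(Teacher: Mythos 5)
Your proof is correct and follows essentially the same route as the paper's: induction on the height of $u$, with the base case handled by climbing from the corresponding leaf of $\F'$ to its root via the containment $\phi(x)\subseteq\phi(\phi(x)\cup\phi(y))$ (the paper's Property~3), and the inductive step forcing the two roots to coincide via the pairwise disjointness of the final roots' values and then applying Property~2. Your extra remark that Property~2 applies because every stored value lies in the image of $\phi$ is a nice touch of care, but the argument is otherwise identical.
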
 
\begin{proof} 
We prove the lemma by induction on the height of the subtree rooted 
at $u$. First assume that $u$ is a leaf in $\F$. Let $i$ be the index 
such that $\phi(u)=\phi(T_i)$, let $u'$ be the leaf in $\F'$ for which 
$\phi(u')=\phi(T_i)$, let $\T'$ be the tree in $\F'$ that has $u'$ as a leaf, 
and let $r'$ be the root of $\T'$. We prove that 
$\phi(u) \subseteq \phi(r')$.  

Let $u'_1=u', u'_2,\ldots,u'_k=r'$ be the nodes in $\T'$ on the path 
from $u'$ to $r'$. For each $i$ with $1 \leq i < k$, let $v'_i$ be the 
sibling of $u'_i$. Since 
\[ \phi(u'_{i+1}) = \phi( \phi(u'_i) \cup \phi(v'_i) ) ,
\] 
it follows from Property~3 that $\phi(u'_i) \subseteq \phi(u'_{i+1})$. 
From this, it follows that 
\[ \phi(u) = \phi(u') = \phi(u'_1) \subseteq \phi(u'_2) \subseteq \ldots 
               \subseteq \phi(u'_k) = \phi(r') . 
\]  

Now assume that $u$ is not a leaf. Let $v$ and $w$ be the children 
of $u$. Observe that $\phi(v) \cap \phi(w) \neq \emptyset$. 
By induction, there exist roots $r'$ and $r''$ in $\F'$ such that 
$\phi(v) \subseteq \phi(r')$ and $\phi(w) \subseteq \phi(r'')$. Since 
$\phi(r') \cap \phi(r'') \neq \emptyset$, we must have $r'=r''$. 
Thus, since $\phi(v) \cup \phi(w) \subseteq \phi(r')$, Property~2 implies 
that 
\[ \phi(u) = \phi( \phi(v) \cup \phi(w) ) \subseteq \phi(r') .
\] 
\end{proof} 

\begin{theorem} 
\label{theorem0}
The $\phi$-cover is well-defined. 
\end{theorem}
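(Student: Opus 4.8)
The plan is to show that any two $\phi$-covers $\C$ and $\C'$, obtained from two (possibly different) sequences of choices in Step~4(a), coincide as sets. Since every element of $\C$ has the form $\phi(r)$ for a root $r$ of the associated forest $\F$, and likewise every element of $\C'$ has the form $\phi(r')$ for a root $r'$ of $\F'$, it suffices to prove that the roots of $\F$ and the roots of $\F'$ induce exactly the same collection of $\phi$-values.

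First I would apply Lemma~\ref{lem:always-there} to a single root. Take any root $r$ of $\F$. Since $r$ is in particular a node of $\F$, the lemma yields a root $r'$ of $\F'$ with $\phi(r) \subseteq \phi(r')$. Applying the lemma again in the reverse direction to the node $r'$ of $\F'$ produces a root $r''$ of $\F$ with $\phi(r') \subseteq \phi(r'')$. Chaining these gives $\phi(r) \subseteq \phi(r'')$, where both $r$ and $r''$ are now roots of the same forest $\F$.

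The crucial step is to upgrade this containment to an equality using the disjointness guaranteed by the termination condition of Step~4. Since $\phi$ always returns a non-empty set, $\phi(r) \subseteq \phi(r'')$ forces $\phi(r) \cap \phi(r'') = \phi(r) \neq \emptyset$. But the elements of $\C$ are pairwise disjoint, so two roots of $\F$ whose values overlap must be identical; hence $r = r''$ and therefore $\phi(r) = \phi(r') = \phi(r'')$. This shows that every value $\phi(r)$ appearing in $\C$ equals some value $\phi(r')$ appearing in $\C'$, i.e.\ $\C \subseteq \C'$. The symmetric argument, swapping the roles of $\F$ and $\F'$, gives $\C' \subseteq \C$, and hence $\C = \C'$, establishing well-definedness.

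I expect the main obstacle to be precisely this containment-to-equality step: a single application of Lemma~\ref{lem:always-there} yields only one-sided containment, which is too weak, so the argument must invoke the lemma in both directions and then exploit pairwise disjointness to ``sandwich'' the two values and collapse the containment into an equality. A minor point to handle along the way is that the process always terminates---each merge in Step~4 decreases the number of roots by exactly one, so at most $m-1$ merges occur---so that both $\C$ and $\C'$ are genuinely well-defined outputs before the uniqueness argument is applied.
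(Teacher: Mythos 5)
Your proposal is correct and follows essentially the same argument as the paper: apply Lemma~\ref{lem:always-there} in both directions to sandwich $\phi(r) \subseteq \phi(r') \subseteq \phi(r'')$, then use non-emptiness of $\phi$-values together with pairwise disjointness of the cover to force $r = r''$ and hence $\phi(r) = \phi(r')$. The added remark on termination is a harmless extra; the core argument matches the paper's proof step for step.
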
 
\begin{proof} 
Let $\C$ and $\C'$ be two $\phi$-covers with corresponding forests 
$\F$ and $\F'$, respectively. We have to prove that $\C = \C'$. 
Observe that 
\[ \C = \{ \phi(r) \mid \mbox{ $r$ is a root in $\F$}\} 
\]
and 
\[ \C' = \{ \phi(r') \mid \mbox{ $r'$ is a root in $\F'$}\}. 
\]
Let $r$ be a root in $\F$. By Lemma~\ref{lem:always-there}, there exists a root 
$r'$ in $\F'$ such that $\phi(r) \subseteq \phi(r')$. Again by 
Lemma~\ref{lem:always-there}, applied with the roles of $\F$ and $\F'$ 
interchanged, there exists a root $r''$ in $\F$ such that 
$\phi(r') \subseteq \phi(r'')$. Thus, we have 
\[ \phi(r) \subseteq \phi(r') \subseteq \phi(r'') . 
\] 
Since $\phi(r) \cap \phi(r'') \neq \emptyset$, we must have $r=r''$. 
Therefore, $\phi(r)=\phi(r')$. We conclude that 
$\C \subseteq \C'$. By a symmetric argument, we can show that 
$\C' \subseteq \C$. 
\end{proof} 

Thus, the $\phi$-cover is well-defined for both the convex hull and the  
axis-aligned bounding box. If $\phi$ is the minimum enclosing circle 
function, then, in addition to not satisfying Property~2, the 
$\phi$-cover is not well-defined: In Figure~\ref{fig:min-circle}, an 
example is given for which the order in which merges are performed can 
result in different outputs. 

\begin{figure}[h]
  \centering
  \includegraphics[scale=1]{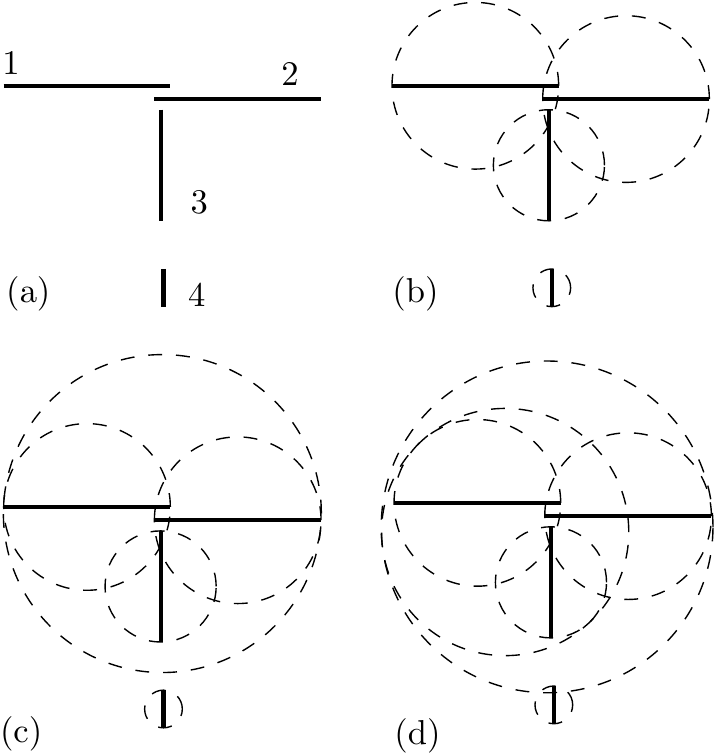}
  \caption{(a)~The input forest with trees numbered; (b)~The minimum enclosing circle of each tree; (c)~Merging 1 and 2 first results in no intersection with 4; (d)~Merging 1 and 3 first results in an intersection with 4.}
  \label{fig:min-circle}
\end{figure}


\section{Computing the Hull-Cover} 
In this section, we take for $\phi$ the convex hull function 
and show that the $\phi$-cover can be computed in $O(n \log^2 n)$ time. 

\subsection{Weakly Disjoint Polygons}
Finding the convex hull of two convex polygons can be a relatively expensive operation due to the fact that their boundaries can cross in $\Omega(n)$ different places. For example, consider a regular $n$-gon being merged with a copy of itself rotated $\epsilon$ degrees. In this section we demonstrate that, because our convex polygons are the convex hulls of disjoint trees, they behave much nicer than general convex polygons.

Let a \emph{weakly disjoint pair} of convex polygons $P$, $Q$ be a pair of convex polygons such that $P \setminus Q$ and $Q \setminus P$ are both connected sets of points, and $P$ does not share a vertex with $Q$. Then a \emph{weakly disjoint set} of polygons is a set of polygons such that all pairs of polygons are weakly disjoint. For simplicity, we assume that the convex hull of a line segment is a valid degenerate convex polygon consisting of two edges. We also assume all vertices are in general position. In this section we prove that weakly disjoint polygons are better behaved than general convex polygons, and that the convex hulls of disjoint trees are weakly disjoint. 

\begin{lemma}
\label{lem:pseudo-disks}
If two convex polygons $P, Q$ are weakly disjoint, then their boundaries intersect at at most two points.
\end{lemma}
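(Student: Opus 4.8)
The plan is to exploit the convexity of the intersection region and reduce the claim to a counting argument about connected components. First I would set $K = P \cap Q$. Since $P$ and $Q$ are convex, $K$ is convex, and in the non-degenerate case (which general position and the no-shared-vertex hypothesis guarantee) $K$ is a convex polygon whose boundary $\partial K$ is a single closed convex curve. Every point of $\partial P \cap \partial Q$ lies on $\partial K$: such a point lies in both $P$ and $Q$, hence in $K$, and moving outward across $\partial P$ leaves $P$ and therefore leaves $K$, so the point is on $\partial K$. Conversely, traversing $\partial K$ once, each edge comes either from $\partial P$ (where the $P$-constraint is tight) or from $\partial Q$, and the curve switches type exactly at the points of $\partial P \cap \partial Q$. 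Hence $\partial K$ decomposes into arcs that alternate between $\partial P$ and $\partial Q$; if there are $m$ maximal arcs of each type, then $|\partial P \cap \partial Q| = 2m$, and it suffices to prove $m \le 1$.

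The crux is to relate $m$ to the number of connected components of $P \setminus Q = P \setminus K$. The $m$ arcs of $\partial K$ lying on $\partial P$ are exactly the maximal portions of $\partial P$ contained in $Q$ (the \emph{contact arcs}); since $\partial P$ crosses $\partial Q$ transversally at each of the $2m$ intersection points, these contact arcs split $\partial P$ into $2m$ arcs alternating between the $m$ contact arcs and $m$ \emph{free} arcs lying outside $Q$. I claim $P \setminus K$ has exactly $m$ components, one per free arc. To prove this I would fix an interior point $o$ of $K$ and use that both $K$ and $P$ are star-shaped from $o$: for each direction $\theta$ the ray from $o$ meets $\partial K$ and $\partial P$ in unique points at radii $r_K(\theta) \le r_P(\theta)$, and
\[
P \setminus K = \{ (\theta, r) : r_K(\theta) < r \le r_P(\theta) \} .
\]
A direction $\theta$ contributes a non-empty radial slice precisely when the point of $\partial P$ in direction $\theta$ lies outside $K$, i.e.\ when $\theta$ lies in the angular range of a free arc. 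Since the radial map onto $\partial P$ is a homeomorphism, the free arcs correspond to $m$ disjoint angular intervals, separated by the contact arcs over which the slice is empty. Over each such interval the region lies between the two continuous graphs $r_K$ and $r_P$ and is therefore connected, while distinct intervals are separated by the empty contact slices. Hence $P \setminus K$ has exactly $m$ connected components.

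Combining the two steps finishes the proof: the hypothesis that $P \setminus Q$ is a connected set forces $m \le 1$, and therefore $|\partial P \cap \partial Q| = 2m \le 2$. The case $m = 0$ corresponds to disjoint or nested polygons, where the boundaries do not cross at all. I expect the main obstacle to be the component-counting step, and in particular verifying that each free arc really does sit in its own component; the star-shapedness of $K$ from an interior point is what makes this rigorous, since it guarantees that the contact arcs genuinely separate the free arcs rather than the components merging behind $K$. The remaining degenerate situations---a boundary that merely touches without crossing, or the degenerate two-edge hull of a segment---are ruled out or handled directly by the general-position and weak-disjointness assumptions.
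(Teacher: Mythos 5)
Your proof is correct, and at its core it runs on the same principle as the paper's: the points of $\partial P \cap \partial Q$ pair up to bound arcs of $\partial P$ that alternate between lying inside and outside $Q$, each outside arc contributes its own connected component of $P \setminus Q$, and weak disjointness therefore caps the number of crossings at two. The difference is in how the component count is certified. The paper simply walks along $\partial P$, tallies crossings, and asserts that once a second outside region is completed it is ``clearly disconnected'' from the first --- the one step that genuinely needs convexity and is left unjustified there. You instead pass to $K = P \cap Q$, read the crossing points off as the switch points between the $m$ alternating $\partial P$-arcs and $\partial Q$-arcs of $\partial K$ (so $|\partial P \cap \partial Q| = 2m$), and then use star-shapedness of $K$ and $P$ from a common interior point to write $P \setminus K$ in radial coordinates as the region between two continuous graphs over $m$ disjoint angular intervals, which pins down exactly $m$ components. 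That radial separation argument is precisely the missing justification for the paper's ``clearly disconnected'' step, so your version is strictly more rigorous; the price is some extra bookkeeping around degenerate cases ($K$ empty, a single point, or all of $P$), which you correctly note are excluded by general position or yield $m = 0$.
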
 

\begin{proof}
Assume the intersection of their boundaries, $\partial P \cap \partial Q$, contains more than two points. Further, assume without loss of generality that $P$ contains points outside of $Q$. Start at a point on $P$'s boundary $\partial P$ that is outside of $Q$, and walk along $\partial P$. Eventually we intersect $\partial Q$, and now $P$ is separated into two connected regions: points inside of $Q$, and points outside of $Q$. If we continue walking along $\partial P$, we eventually cross $\partial Q$ again. Now there are three regions of $P$: two outside $Q$, and one inside $Q$, but the two outside $Q$ may be the same. Continuing along $\partial P$ we must eventually intersect $Q$ again. Now the second outside region has been completed, and is clearly disconnected from the first. Therefore $P$ and $Q$ aren't weakly disjoint. 
\end{proof}

\begin{lemma}
\label{lem:one-in-other}
If two convex polygons $P, Q$ are weakly disjoint, but not disjoint, then one contains a vertex of the other.
\end{lemma}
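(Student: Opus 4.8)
The plan is to prove the statement by a short case analysis on how $\partial P$ and $\partial Q$ meet, using Lemma~\ref{lem:pseudo-disks} to bound the number of boundary intersections, and using the elementary fact that two distinct line segments cross in at most one point to force a vertex into the interior of the other polygon.

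First I would dispose of the case in which $\partial P$ and $\partial Q$ do not intersect. Since $P \cap Q \neq \emptyset$ but the boundaries are disjoint, one polygon must lie entirely inside the other; say $Q \subseteq P$. Then every vertex of $Q$ lies in $P$, so $P$ contains a vertex of $Q$ and this case is done. The remaining case is that the boundaries do intersect. By Lemma~\ref{lem:pseudo-disks} they meet in at most two points, and the general-position assumption rules out a single tangential contact, so $\partial P$ and $\partial Q$ cross at exactly two points $x$ and $y$. These two points split $\partial P$ into two arcs and $\partial Q$ into two arcs, and the convex, lens-shaped region $P \cap Q$ is bounded by the arc $\alpha$ of $\partial P$ lying inside $Q$ and the arc $\beta$ of $\partial Q$ lying inside $P$, both joining $x$ to $y$.

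The heart of the argument is to show that at least one of $\alpha, \beta$ contains a vertex. I would argue this by contradiction: if neither arc contained a vertex, then $\alpha$ would be a sub-segment of a single edge $e_P$ of $P$ and $\beta$ a sub-segment of a single edge $e_Q$ of $Q$, so the boundary of $P \cap Q$ would consist of the two straight segments $\alpha \subseteq e_P$ and $\beta \subseteq e_Q$, meeting at both $x$ and $y$. Since two distinct lines meet in at most one point, this forces $e_P$ and $e_Q$ to be collinear, contradicting general position and the fact that $P \cap Q$ has nonempty interior. Hence, say, $\alpha$ contains a vertex $v$ of $P$. Because $\alpha$ lies inside $Q$, and because general position together with the weak-disjointness hypothesis that $P$ and $Q$ share no vertex guarantees that $v$ does not lie on $\partial Q$, the vertex $v$ lies in the interior of $Q$. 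Thus $Q$ contains a vertex of $P$, which completes the proof.

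I expect the main obstacle to be the lens step and the careful use of general position. Two subtleties must be handled cleanly: excluding the single tangential-contact configuration (so that an intersecting pair of boundaries really crosses at exactly two points), and verifying that the vertex $v$ found on $\alpha$ lies \emph{strictly} inside $Q$ rather than on $\partial Q$. Both rely on reading ``general position'' and the no-shared-vertex condition as ruling out a vertex of one polygon lying on an edge or vertex of the other; once that reading is fixed, the segment-crossing contradiction and the containment conclusion follow directly.
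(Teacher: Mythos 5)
Your proof is correct and follows essentially the same route as the paper's: both invoke Lemma~\ref{lem:pseudo-disks} to cap the boundary intersections at two and then argue that the positive-area region $P \cap Q$ must have a vertex that is a vertex of $P$ or of $Q$ (your ``two collinear segments'' contradiction is just an explicit form of the paper's ``a polygon of positive area needs at least three vertices'' count). You are somewhat more careful than the paper in separating the nested case and in checking that the found vertex lies strictly inside the other polygon, but the underlying idea is identical.
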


\begin{proof}
If two convex polygons are not disjoint, then they have a non-empty intersection. If this intersection has no area, then they only share part of a boundary. However the vertices are in general position, so this cannot be the case. So their intersection has some non-zero area. Remark that the vertices of $P \cap Q$ are either vertices of $P$, $Q$, or points on $\partial P \cap \partial Q$. Since $P \cap Q$ has positive area, it must have at least $3$ vertices. However, by Lemma~\ref{lem:pseudo-disks}, we know that there are at most two points in $\partial P \cap \partial Q$. So it follows that one of these three vertices must be a vertex of $P$ or $Q$. Therefore a vertex of one is inside the other.
\end{proof}

\begin{lemma}
\label{duality}
The convex hulls of two disjoint trees are weakly disjoint.
\end{lemma}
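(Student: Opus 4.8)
The plan is to verify the three requirements of weak disjointness for $P = \CH(T_1)$ and $Q = \CH(T_2)$, where $T_1, T_2$ are the two disjoint trees. The requirement that $P$ and $Q$ share no vertex is immediate: every vertex of $P$ is a vertex of $T_1$ and every vertex of $Q$ is a vertex of $T_2$, and since $T_1 \cap T_2 = \emptyset$ these vertex sets are disjoint. The real work is to show that $P \setminus Q$ and $Q \setminus P$ are each connected. Since $P \cap Q$ is convex, I would first argue that $P \setminus Q$ is disconnected precisely when $\partial P$ and $\partial Q$ cross four or more times: writing $R = P \cap Q$, its boundary $\partial R$ is a convex curve built from arcs of $\partial P$ and arcs of $\partial Q$ that alternate and switch type exactly at the crossing points, so $2k$ crossings produce $k$ arcs of each type, and the components of $P \setminus Q$ correspond to the $k$ arcs of $\partial Q$ appearing on $\partial R$. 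Thus it suffices to prove that two disjoint trees cannot produce four or more boundary crossings.

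Assume for contradiction that $\partial P$ and $\partial Q$ cross at least four times (by general position the crossings are transversal and even in number). I would first show that each maximal arc of $\partial P$ lying outside $Q$ contains a vertex of $P$: if such an arc were contained in a single edge $e$ of $P$, then both endpoints of $e$ would lie inside the convex set $Q$ while a middle portion lies outside, contradicting convexity of $Q$. Hence the $k \geq 2$ outside arcs of $\partial P$ supply vertices $a_1, a_2$ of $P$ lying in two distinct components $C_1, C_2$ of $P \setminus Q$; being vertices of $P$, both are points of $T_1$. Symmetrically, I obtain vertices $b_1, b_2$ of $Q$, hence points of $T_2$, lying in two distinct components $D_1, D_2$ of $Q \setminus P$.

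Next I would exploit connectivity of the trees. Since $T_1$ is connected and $T_1 \subseteq P$, there is a path $\pi_1 \subseteq T_1$ from $a_1$ to $a_2$; as $C_1$ and $C_2$ are distinct components of $P \setminus Q$ and $\pi_1$ stays in $P$, the path must pass through $R = P \cap Q$, entering and leaving $R$ across arcs of $\partial Q$. Likewise $T_2$ yields a path $\pi_2$ from $b_1$ to $b_2$ crossing $R$ and meeting $\partial R$ on arcs of $\partial P$. Viewing $R$ as a topological disk, the endpoints of $\pi_1 \cap R$ lie on $\partial Q$-arcs of $\partial R$ while those of $\pi_2 \cap R$ lie on $\partial P$-arcs; because the two arc types strictly alternate around $\partial R$, these four points interleave, so the two sub-paths, joining interleaved points of the disk's boundary, are forced to intersect. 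This produces a point of $\pi_1 \cap \pi_2 \subseteq T_1 \cap T_2$, contradicting disjointness of the trees.

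The main obstacle I anticipate is this final linking argument: making precise that the entry and exit points of the two paths on $\partial R$ genuinely interleave, and that interleaving forces the paths to cross inside the disk $R$ (a Jordan-curve style separation argument). Care is also needed to confirm that the two chosen outside arcs of $\partial P$ lie in distinct components, and likewise for $Q$, and to dispatch the degenerate cases where one hull is contained in the other or the hulls are disjoint, in each of which the set differences are trivially connected.
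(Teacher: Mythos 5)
Your proof is correct in outline and reaches the same conclusion, but it diverges from the paper's argument at the decisive step. The first half is essentially the paper's: the no-shared-vertex condition follows from disjointness of the trees, and each component of $P \setminus Q$ (resp.\ $Q \setminus P$) contains a vertex of $P$ (resp.\ $Q$), hence a point of the corresponding tree --- the paper asserts this from convexity plus the absence of shared vertices, while you derive it from the chord argument on a single hull edge; both are fine. The difference is in how the contradiction is extracted. The paper uses only \emph{one} tree: it takes the path in the first tree joining the two separated hull vertices and claims this path either stays in $P\setminus Q$ or forces the two trees to intersect. As written, that dichotomy skips the case in which the path enters $Q=\CH(T_2)$ without touching $T_2$ itself, so the paper's closing sentence does more work than it justifies. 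Your two-path argument --- both trees must thread through the lens $P\cap Q$, their entry and exit points on its boundary interleave because the $\partial P$- and $\partial Q$-arcs alternate, and paths joining interleaved boundary pairs of a disk must cross --- supplies exactly the missing justification and actually exhibits a point of $T_1\cap T_2$. The cost is the extra topological bookkeeping you flag yourself: when there are more than four boundary crossings you must choose the two components of $Q\setminus P$ so that their gate arcs lie on opposite sides of the sub-path of $\pi_1$ inside the lens (which is possible, since every outside arc of $\partial Q$ contains a vertex of $Q$ and the arc types alternate), and you must dispatch the nested and disjoint cases separately, as you note. Modulo writing out that interleaving step carefully, your proof is complete and, at the critical point, more rigorous than the one in the paper.
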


\begin{proof}
Assume there exists two disjoint trees $R$, $S$, but their convex hulls are not weakly disjoint. Let $P = Conv(R)$ and $Q = Conv(S)$. If $R$ and $S$ share a vertex, then clearly they are not disjoint, and we have a contradiction. Then either $P \setminus Q$ is disconnected, or $Q \setminus P$ is. Assume without loss of generality that $P \setminus Q$ is disconnected. Then there exists two points $p, p' \in P \setminus Q$  such that there exists no path between $p$ and $p'$ inside of $P \setminus Q$. Since both $P$ and $Q$ are convex and share no vertices, the connected components $p$ and $p'$ are part of must contain a vertex of $P$. Therefore,  without loss of generality, we may assume $p$ and $p'$ are vertices of $P$. However, that means $p$ and $p'$ are points on $R$, which has by definition a path that connects them. So either $R$ and $S$ intersect, or there exists a path between $p$ and $p'$; both of which are contradictions. Therefore, if two trees are disjoint, their convex hulls must be weakly disjoint.
\end{proof}

Since the convex hulls of disjoint trees are weakly disjoint, unlike general convex polygons, finding the convex hull of their union is simply a matter of finding at most two tangents to join them by. However, in merging two convex hulls it is no longer guaranteed that the new set of convex hulls has this property. Therefore, it would be desirable to merge convex hulls in some way in which we can maintain this property as an invariant.


\subsection{Shoot and Insert}
If two trees $R$ and $S$ in $T$ have intersecting convex hulls, and we can find an edge to connect $R$ and $S$ without intersecting any other tree in $T$, then we have effectively merged the two trees, while maintaining the invariant of having a set of pairwise non-crossing trees.

\begin{lemma} 
\label{lem:blocked-or-nested}
Assume $R$ and $S$ are two non-crossing trees whose convex hulls intersect. Then the convex hull of one is strictly inside the other, or there exists a pair of adjacent vertices on the convex hull of one whose visibility is blocked by the other tree.  
\end{lemma}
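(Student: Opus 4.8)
The plan is to establish the contrapositive of the second alternative: assuming that no pair of adjacent hull vertices, on either hull, has its visibility blocked by the opposite tree, I will show that one hull lies strictly inside the other. Write $P=\CH(R)$ and $Q=\CH(S)$; by Lemma~\ref{duality} these two convex polygons are weakly disjoint, and by hypothesis they are not disjoint. The argument hinges on one reformulation of ``blocked'': since $\partial P$ is exactly the union of the edges of $P$, and since by general position $S$ can meet $\partial P$ only in the relative interior of an edge---never at a vertex of $P$, which is a vertex of $R$ and hence not a point of the disjoint tree $S$---the hypothesis that no adjacent hull pair of $P$ is blocked by $S$ is equivalent to $S\cap\partial P=\emptyset$.

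Next I would exploit connectivity. If $S$ does not meet $\partial P$, then because $S$ is a connected set it lies entirely in $\mathrm{int}(P)$ or entirely in the exterior of $P$. In the first case $Q=\CH(S)$ is the convex hull of a set contained in the convex open region $\mathrm{int}(P)$, hence $Q\subseteq\mathrm{int}(P)$, so $Q$ is strictly inside $P$ and we are in the first alternative. Applying the identical dichotomy to $R$ and $\partial Q$, either $P$ is strictly inside $Q$ (again the first alternative) or $R\cap Q=\emptyset$.

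The only surviving case is $S\cap P=\emptyset$ together with $R\cap Q=\emptyset$, and here Lemma~\ref{lem:one-in-other} finishes the job: since $P$ and $Q$ are weakly disjoint but not disjoint, one of them contains a vertex of the other. But a vertex of $Q$ is a vertex of $S$ and a vertex of $P$ is a vertex of $R$, so such a containment would place a point of one tree inside the other's hull, contradicting either $S\cap P=\emptyset$ or $R\cap Q=\emptyset$. This contradiction eliminates the surviving case, and the dichotomy is complete.

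I expect the main obstacle to be the careful justification of the ``blocked $\Leftrightarrow$ $S$ crosses $\partial P$'' equivalence---that is, arguing from general position that $S$ cannot touch $\partial P$ at a hull vertex, and that any crossing in the relative interior of a hull edge genuinely obstructs the visibility of that edge's two endpoints. Once this is pinned down, the remainder is a short case analysis driven only by the connectivity of the trees and by Lemma~\ref{lem:one-in-other}. It is worth noting that the proof never invokes the two-point bound of Lemma~\ref{lem:pseudo-disks} directly; it uses only its consequence, Lemma~\ref{lem:one-in-other}.
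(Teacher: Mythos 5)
Your proof is correct, and it is organized differently from the paper's even though both rest on the same two ingredients: Lemma~\ref{lem:one-in-other} and the connectivity of the trees. The paper argues directly: it invokes Lemma~\ref{lem:one-in-other} first to obtain a vertex $r$ of one hull inside the other, and then either every hull vertex is inside (giving the nesting alternative) or the tree path from $r$ to a vertex outside must cross the other hull's boundary, and the crossed edge is the blocked pair. You argue the contrapositive: assuming no adjacent pair on either hull is blocked, you show each tree misses the other's hull boundary entirely (your explicit remark that a tree cannot meet the opposite hull at a hull vertex is the same disjointness fact the paper uses implicitly when it says the path ``must pass between two vertices''), so by connectivity each tree lies wholly inside or wholly outside the other's hull; the two inside cases give the nesting alternative, and the doubly-exterior case is killed by Lemma~\ref{lem:one-in-other}, since a hull vertex of one polygon is a point of the corresponding tree. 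The paper's version is shorter and constructive---it exhibits the blocking path---whereas yours makes the blocked-iff-boundary-crossing equivalence and the symmetric treatment of the two hulls explicit, and it isolates exactly where Lemma~\ref{lem:one-in-other} is needed (to rule out the mutually exterior configuration rather than to start the argument). Both arguments are sound and of essentially the same depth.
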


\begin{proof}
By Lemma~\ref{lem:one-in-other}, we know that one contains a vertex of the other. Assume without loss of generality that a vertex $r$ of $Conv(R)$ is inside of $Conv(S)$. If every other vertex of $Conv(R)$ is inside of $Conv(S)$, then $Conv(R)$ is strictly inside of $Conv(S)$ and we are done. Assume this is not the case. Then there exists some path along $R$ from $r$ to the outside of $Conv(S)$. This path must pass between two vertices of $Conv(S)$, and therefore obstruct their visibility.
\end{proof}

Consider shooting a ray between the two vertices $p$, $q$ of $Conv(S)$ that are obstructed by one or more other trees. This ray will necessarily intersect some other tree $R$ first at a point $q'$. By definition, the edge $pq'$ is an edge that joins $R$ and $S$ without intersecting any other tree. If this is the case, then we can stop shooting rays along $S$, replace $S$ and $R$ with $S \cup R \cup pq'$, and starting shooting rays along the convex hull of that new connected component. Furthermore, if we perform this process for all adjacent pairs of vertices of $Conv(S)$, and every ray reached the target vertex, we can conclude that either $S$ is disjoint from all other convex hulls, or part of a well-nested hierarchy of boundary-disjoint convex hulls. If the former, then $S$ is part of our output. If the latter, then the largest convex hull that contains $S$ is part of our output. 

Ishaque et al.\cite{Ishaque12} provide a ray shooting data structure that supports shooting rays from the boundary of obstacles, that are themselves inserted into the obstacles. Using their structure, a set of $n$ pairwise disjoint polygonal obstacles can be preprocessed in $O(n\log n)$ time and space to support $m$ permanent ray shootings in $O((n+m)\log^2n + m\log m)$ time. Therefore shooting $n$ rays takes $O(n\log^2n)$ time. We refer to this data structure as \emph{permashoot}.


\subsection{Algorithm}
We start by computing the sets  
\[ \C = \{ \CH(T_i) | 1 \leq i \leq m \}  
\]
and  
\[ E = \{e | \mbox{ $e$ is an edge of some element of $\C$} \} . 
\] 
We build a permashoot instance $R$ on $T$, and a union-find data 
structure $U$ on $T$. The latter structure is used for determining what 
connected component a given edge is part of.

As long as $E$ is non-empty, we do the following: 
Take an arbitrary edge $e$ in $E$ and remove it from $E$. If 
$e$ is not stored in $R$, search in $U$ for $s$, the component $e$ is 
part of. Shoot a ray in $R$ from one endpoint of $e$ along $e$, and 
return the component $r$ that was hit. If $s \neq r$, then merge 
$\CH(s)$ and $\CH(r)$ in $\C$; remove and add edges from $E$ to 
reflect the new state of $\C$; and union $s$ and $r$ in $U$. 

At this moment, the set $E$ is empty. We perform a plane-sweep on $\C$, 
and return all the convex hulls that are not contained inside another 
convex hull.
  
An example is given in Figure~\ref{fig:algorithm}.

\begin{figure}[ht]
  \centering
  \includegraphics[scale=0.7]{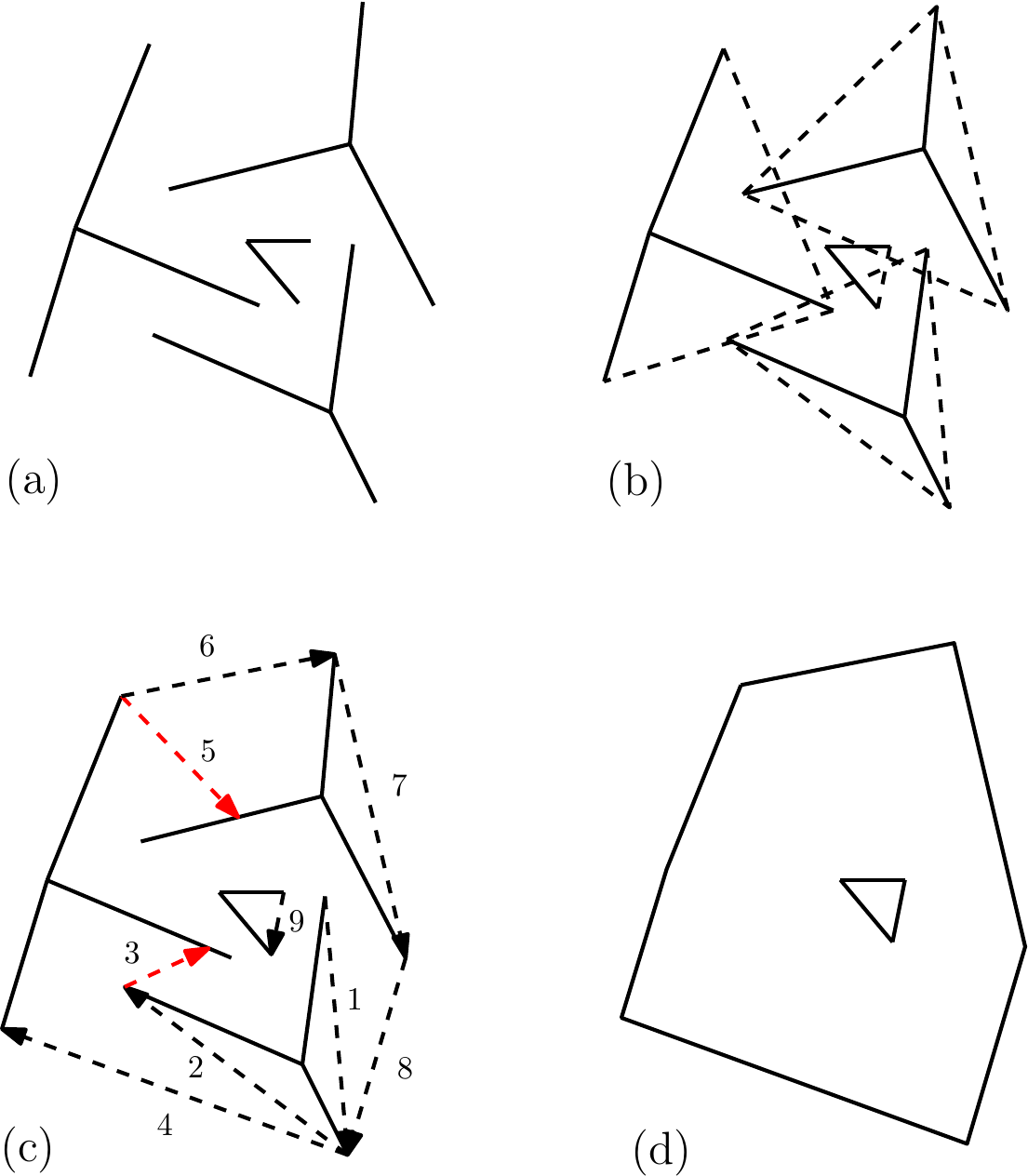}
  \caption{(a)~The input; (b)~Initial convex hulls of the input; (c)~Rays shot by the algorithm (numbered in order they were shot), with rays that caused a merge in red; (d)~Well nested hierarchy of hulls that results}
  \label{fig:algorithm}
\end{figure}

Our algorithm shoots a ray for every edge of every convex hull. If 
any two convex hulls intersect, but are not well-nested, then they 
are found during this process, and replaced by the convex hull of 
their union with an edge that joins them without intersecting any 
other components. This ensures that the invariant of having a set of 
pairwise weakly disjoint polygons holds. This continues until no more 
intersections are found in this way. By 
Lemma~\ref{lem:blocked-or-nested}, we can conclude that we now have 
a set of convex hulls that are either disjoint, or part of a 
well-nested hierarchy. Our plane-sweep then finds all the maximal 
hulls, and returns only these.

\subsection{Analysis}
Because we maintain the invariant of having a set of pairwise 
weakly disjoint polygons, we know that each union adds at most two edges to the set of edges (the tangents between the two hulls). At worst, we perform $O(m) = O(n)$ unions, which adds $O(n)$ edges to check. Initially, there are $O(n)$ edges to check from the starting hulls. Therefore we end up shooting $O(n)$ rays, which takes $O(n\log^2n)$ time. 

For each ray shot we perform a constant amount of union and find 
operations to our union-find structure, each of which can easily be 
done in $O(\log n)$ time \cite[Chapter 21]{clrs}. So union-find only 
takes us $O(n \log n)$ time in total. 

Merging two weakly disjoint convex hulls takes $O(\log n)$ time if we 
maintain them using height balanced binary search 
trees \cite[Section 3.3.7]{smidsbook}. Since we merge at most 
$O(m) = O(n)$ trees, merging the trees takes $O(n\log n)$ time.

Finally, the plane-sweep takes $O(n\log n)$ time to find all the 
maximal convex hulls.

Therefore our algorithm takes $O(n \log^2 n)$ time. This proves the 
following theorem.

\begin{theorem}
The hull-cover of a set of pairwise non-crossing trees with a total 
of $n$ vertices can be computed in $O(n\log^2 n)$ time.
\end{theorem}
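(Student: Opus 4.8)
The plan is to establish the $O(n\log^2 n)$ running time by bounding two quantities separately: the total number of rays shot (which drives the dominant permashoot cost), and the per-operation overhead of the bookkeeping structures. The final theorem is essentially a matter of assembling the correctness argument (already supplied by Lemma~\ref{lem:blocked-or-nested} together with the weak-disjointness lemmas) with a careful accounting of the running time, so I would spend most of the effort on the latter.

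First I would fix the correctness of the output. The algorithm maintains the invariant that the current set of convex hulls is pairwise weakly disjoint; this holds initially by Lemma~\ref{duality}, and I would argue it is preserved under each merge because we join two intersecting trees by a tangent edge $pq'$ that crosses no other tree, so the merged tree remains non-crossing with every other tree and hence, again by Lemma~\ref{duality}, its hull is weakly disjoint from the rest. Once $E$ is exhausted, Lemma~\ref{lem:blocked-or-nested} guarantees that any two remaining intersecting hulls must be strictly nested (no obstructed visible pair can remain, else its ray would still be pending), so the plane sweep correctly extracts exactly the maximal hulls, which are the $\phi$-cover for $\phi = \CH$.

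Next I would carry out the running-time accounting in the order the costs appear. The key structural observation is that weak disjointness caps the cost of each merge: by Lemma~\ref{lem:pseudo-disks} two weakly disjoint hulls share at most two boundary points, so joining them introduces at most two new tangent edges into $E$. Since there are at most $O(m)=O(n)$ merges and $O(n)$ initial edges, the total number of edges ever placed in $E$ — and thus the total number of rays shot — is $O(n)$. Feeding this into the permashoot bound of $O((n+m)\log^2 n + m\log m)$ with $m=O(n)$ gives $O(n\log^2 n)$, the dominant term. I would then dispatch the remaining costs: $O(1)$ union-find operations per ray at $O(\log n)$ each gives $O(n\log n)$; maintaining each hull as a height-balanced search tree lets each merge run in $O(\log n)$, for $O(n\log n)$ over all merges; and the final plane sweep is $O(n\log n)$. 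Summing yields $O(n\log^2 n)$.

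The main obstacle I anticipate is the edge-count bound, specifically justifying that the total number of rays is $O(n)$ rather than growing with the merge process. The subtlety is that merging replaces two hulls by one whose boundary may contain many original vertices, so one must confirm that we do not re-shoot rays for the inherited edges (edges already stored in $R$ are skipped) and that only the at-most-two fresh tangents per merge contribute new work. This amortized bound — each merge adds $O(1)$ new edges and merges are bounded by the number of trees — is what makes the whole analysis collapse to $O(n\log^2 n)$, and it rests entirely on the weak-disjointness property established in the previous subsection; once that is in hand, the rest is routine summation.
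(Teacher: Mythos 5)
Your proposal is correct and follows essentially the same route as the paper: the same invariant of pairwise weak disjointness giving at most two new tangent edges per merge, the same $O(n)$ bound on total rays feeding into the permashoot cost, and the same $O(n\log n)$ accounting for union-find, hull merging via balanced trees, and the final plane sweep. Your added remarks on why the invariant is preserved and on skipping inherited edges are slightly more explicit than the paper's exposition but do not change the argument.
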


\section{Computing the Box-Cover} 
We now assume that $\phi$ is the axis-aligned bounding box cover. 
We refer to the $\phi$-cover as the \emph{box-cover}.  

Let $Box(S)$ be the axis-aligned bounding box of a tree $S$. A 
simple solution to box-cover is as follows. Create a dynamic range 
searching data structure that stores axis-aligned line segments 
and supports queries for those line segments in an axis-aligned 
query box. For each tree $S$ in the input, query the structure for the segments found in the $Box(S)$. For each segment found, remove its parent bounding box from the structure. Then perform a query on the structure with the bounding box of all the boxes found in this way, plus the bounding box we just queried with. Repeat this until no segments are found. Then insert the last box we queried with into the structure. Then run a plane sweep to find all the outermost boxes.

When our algorithm finishes inserting boxes we have a set of boundary-disjoint boxes, as in our hull-cover algorithm. Therefore, as before, it is correct.

Dynamic structures for axis-aligned segment queries exist that take $O(\log^2n + k)$ time for queries, insertion, and deletion\cite{dynamicrange}. Since we start with an empty structure, preprocessing time is irrelevant. When we find an intersection, we replace $O(k)$ boxes with a single box. Since there are $O(m) = O(n)$ boxes, and each box gets inserted and removed at most once, it follows that our algorithm takes $O(n\log^2n)$ time to perform this process in total. The plane sweep takes only $O(n\log n)$ time. Therefore, our algorithm takes $O(n\log^2n)$ time in total. This proves the following theorem.

\begin{theorem}
The box-cover of a set of pairwise non-crossing trees with a total 
of $n$ vertices can be computed in $O(n\log^2 n)$ time.
\end{theorem}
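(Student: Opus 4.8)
The plan is to reduce the problem to maintaining a set of axis-aligned boxes under intersection detection, and to argue that the naive merging process can be driven efficiently by a dynamic range-searching structure. Since the axis-aligned bounding box function satisfies Properties~1 and~2, Theorem~\ref{theorem0} guarantees that the box-cover is well-defined; hence it suffices to exhibit \emph{any} algorithm that produces a valid terminal configuration, that is, a set of pairwise disjoint boxes obtained by repeatedly merging overlapping boxes. This frees me from having to imitate one particular merge order and reduces the task to correctness plus a running-time bound for a single concrete strategy.

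First I would set up the detection mechanism. The key observation is that two axis-aligned boxes have overlapping interiors precisely when either their boundaries cross (so an edge of one lies strictly inside the other) or one is nested inside the other. Storing the four edges of every current box in a dynamic structure that answers ``which stored axis-aligned segments lie inside a query box'' in $O(\log^2 n + k)$ time for queries, insertions, and deletions then lets me test, for a candidate box $Box(S)$, exactly those stored boxes whose boundary crosses $Box(S)$ or that are contained in it. For each box found I delete its edges, form the bounding box of everything found together with $Box(S)$, and query again, repeating until a query returns nothing, at which point I insert the resulting box. Because every such step replaces several boxes by their common bounding box, the process realizes a sequence of legal box-cover merges.

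Next I would establish correctness. When the loop for a tree terminates, the inserted box crosses the boundary of no stored box and contains none of them, so the invariant that the stored boxes are pairwise boundary-disjoint is preserved; the only relationship the detection step can miss is that the new box is itself \emph{contained} in an already-stored box, which produces a nesting rather than a boundary crossing. After all trees are processed I run a plane sweep that discards every box contained in another and keeps the outermost ones. Since discarding a contained box is exactly the (trivial) merge the box-cover performs on a nested pair, these outermost boxes are pairwise disjoint and form a valid terminal configuration, so by well-definedness they are the box-cover.

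Finally I would bound the running time by a charging argument. Each box is inserted at most once, and once it is reported by a query it is deleted and absorbed into a larger box and never reinserted; hence the total output $\sum k$ is $O(m)=O(n)$, and the total number of insertions and deletions is $O(n)$, costing $O(n\log^2 n)$ in all. Every round of querying except the last deletes at least one box, so the number of queries is at most the total number of deletions plus one per tree, again $O(n)$; at $O(\log^2 n)$ per query plus the $O(\sum k)=O(n)$ reporting cost, all queries together cost $O(n\log^2 n)$, and the concluding plane sweep costs $O(n\log n)$, giving $O(n\log^2 n)$ overall. I expect the main obstacle to be the correctness argument rather than the timing: I must check carefully that the directional nature of detection (a query finds boxes it crosses or contains, but not boxes that contain it) never fails to eventually account for two genuinely overlapping boxes, and that the deferred nestings are resolved correctly by the plane sweep so that the output coincides with the well-defined box-cover.
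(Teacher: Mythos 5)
Your proposal follows essentially the same route as the paper: the same dynamic axis-aligned segment/windowing structure with $O(\log^2 n + k)$ operations, the same query--delete--requery loop per tree, the same deferral of nested boxes to a final plane sweep, and the same charging argument that each box is inserted and deleted at most once. The only difference is that you spell out the correctness details (the asymmetry of detection for containing boxes and the appeal to well-definedness) that the paper dispatches with a one-line reference to the hull-cover argument, which is a welcome elaboration rather than a different approach.
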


\section{Conclusions and Open Problems}
We are able to compute the solutions to hull-cover and box-cover in $O(n\log^2n)$ time. However this is not obviously optimal. It remains to be seen whether there are better algorithms for these problems.

While the hull-cover is a potentially powerful pre-processing step for computing the actual coverage, the relationship between the two is fairly weak. In the best case the hull-cover is the convex hull of the input, and the two are the same. However in the worst case the hull-cover is exactly the input, but the coverage is something of size $\Omega(n^4)$.

Given a set $\mathcal{O}$ of orientations, an $\mathcal{O}$-convex set $S$ is a set of points such that every line with an orientation in $\mathcal{O}$ has either an empty or connected intersection with $S$. The $\mathcal{O}$-hull of a set $T$ of points is then the intersection of all $\mathcal{O}$-convex sets that contain $T$. When $\mathcal{O} = \{[0, 180)\}$, the $\mathcal{O}$-hull is the convex hull. When $\mathcal{O} = \emptyset$, the $\mathcal{O}$-hull is the identity function.
The $\mathcal{O}$-hull satisfies our properties for being well-defined \cite{rawlins91}. However, an algorithm for the general $\mathcal{O}$-hull is not immediately obvious. Further, it is unclear as to whether there are other non-trivial well-defined covering functions beyond the $\mathcal{O}$-hull and the axis-aligned bounding box. The geodesic hull does satisfy our properties, but without a bounding domain the geodesic hull is just the convex hull. We know from the start of the paper that the minimum enclosing circle does not produce well-defined results, and a similar argument applies to the minimum enclosing square. 

Remark that our proof that general $\phi$-covers are well-defined does not rely on the fact that we are working in two dimensions. This allows us to easily extend the problem into higher dimensions, where the convex hull and bounding box still work. However, while our technique for bounding boxes generalizes to $d$-dimensions nicely, our technique for the convex hull does not. Therefore, a technique for computing the hull-cover that generalizes well would be desirable. 

\section*{Acknowledgement}
Special thanks to Pat Morin for consultation on certain proofs.

\addcontentsline{toc}{chapter}{References}
\small 
\bibliographystyle{abbrv}
\bibliography{bibliography}

\end{document}